\newtheorem{algo}{Algorithm}[section]
\newtheorem{proof}{Proof}[section]
\newtheorem{prop}{Proposition}[section]
\newtheorem{remark}{Remark}[section]
\newtheorem{lemma}{Lemma}[section]
\begin{document}
\begin{center}
{\bf Holonomic Decent Minimization Method for Restricted Maximum Likelihood Estimation}
\vspace*{5mm}\\
Rieko Sakurai$^{1}$ , and Toshio Sakata $^{2}$
\vspace*{5mm}\\
$^{1}$~{\it Graduate School of Medicine,  Kurume University 67 Asahimachi, Kurume 830-0011, JAPAN}\\
$^{2}$~{\it Faculty of Design Human Science, Kyushu University, 4-9-1 Shiobaru Minami-ku, Fukuoka 815-8540, JAPAN}\\
email: a213gm009s@std.kurume-u.ac.jp

\end{center}

\begin{abstract}

Recently, the school of Takemura and Takayama have developed a quite interesting minimization method called {\it holonomic gradient descent method} (HGD). It works by a mixed use of Pfaffian differential equation satisfied by an objective holonomic function and an iterative optimization method. They successfully applied the method to several maximum likelihood estimation (MLE) problems, which have been intractable in the past. On the other hand, in statistical models, it is not rare that parameters are constrained and therefore the MLE with constraints has been surely one of fundamental topics in statistics. In this paper we develop HGD with constraints for MLE . 
\end{abstract}
{\it Key Words : Holonomic gradinet descent method, Newton-Raphson method with penalty function, von Mises-Fisher distribution}

\section{Introduction}

Recently, the both schools of Takemura and Takayama have developed a quite interesting minimization method called holonomic gradient descent method(HGD). It utilizes Gr\"{o}bner basis in the ring of differential operator with rational coefficients. Gr\"{o}bner basis in the differential operators  plays a central role in deriving some differential equations called a Pfaffian system for optimization. HGD works by a mixed use of Pfaffian system and an iterative optimization method. It has been successfully applied to several maximum likelihood estimation (MLE) problems, which have been intractable in the past. For example, HGD  solve numerically the MLE problems for the von Mises-Fisher distribution and the Fisher-Bingham distribution on the sphere (see,  Sei et al.(2013)  and Nakayama et al.(2011)). Furthermore, the method has also been applied to the evaluation of the exact distribution function of the largest root of a Wishart matrix, and it is still rapidly expanding the area of applications(see, Hashiguchi et al.(2013)).
On the other hand, in statistical models, it is not rare that parameters are constrained and therefore the MLE problem with constraints has been surely one of fundamental topics in statistics. In this paper, we develop HGD for MLE problems with constraints, which we call the constrained holonomic gradient descent(CHGD).
The key of CHGD is to separate the process into (A) updating of new parameter values by Newton-Raphson method with penalty function and (B) solving a Pfaffian system.

\section{Constrained Optimization Problem}

We consider the following the constrained optimization problem.
\begin{eqnarray}
(P)&min&\ f(x) \nonumber \\
&s.t.& g_i(x) \leq 0,\ h_i(x)=0 
\end{eqnarray}
where $i=1,...,m$, $j=1,...,l$ 
and $f,g_i,h_j : R^n \rightarrow R$ are all assumed to be continuously differentiable function. $g_i(x)$ is an equality constraint function and $h_j(x)$ is an inequality constraint function. In this paper, the objective function $f(x)$ is assumed to be holonomic. 
We call the interior region defined by the constraint functions {\it the feasible region}. 

\subsection{Penalty Function Method}

A penalty function method replaces a constrained optimization problem by a series of unconstrained problems.
It is performed by adding a term to the objective function that consists of a penalty parameter $\rho$ and a measure of violation of the constraints. In our simulation, we use {\it the exact penalty function method}. The definition of the exact penalty function is given by (see Yabe (2006)).
\begin{eqnarray}
P(x; \rho):= f(x) + \rho\{\sum_{i=1}^m |g_i(x)| + \sum_{j=1}^l max(0,h_j(x))\}, \ \rho>0
\end{eqnarray}

\section{Holonomic descent method} 

Assume that we seek the minimum of a holonomic function $f(x)$ and the point $x$ which gives the minimum $f(x)$. In HGD, we use the iterative method together with a Pfaffian system. In this paper, we use the the Newton-Raphson iterative minimization method in which the renewal rule of the search point is 
given by
$$ 
x_{k+1}=x_{k}-H^{-1}(x_{k})\nabla f(x_{k}),  
$$  
where $\displaystyle{ \nabla f(x_{k})=(\frac{\partial f(x_{k})}{\partial x_{1}},\ldots, \frac{\partial f(x_{k})}{\partial x_{k}})^{T}}$ and $H(x_{k})$ is the Hessian of $f(x)$ at $x=x_{k}$.

\subsection{Mathematical background}

HGD is based on the theory of the Gr\"{o}bner basis.
In the following, we refer to the relation of a numerical method and the Gr\"{o}bner basis.

Let $R$ be the differential ring written as
\begin{eqnarray}
R = {\bm C}[x_1,..., x_n] \langle \partial_1,..,\partial_n \rangle \nonumber
\end{eqnarray}
where ${\bm C}[x_1,...,x_n]$ are the rational coefficients of differential operators.
Suppose that  $I=\{\ell_i|i=1,...,p\} $ is a left ideal of $R$, $k[\bm{x}]$ is a field and $D \in k[\bm{x}]\langle \partial_1,..,\partial_n \rangle \in I$. If an arbitrary function $f$ satisfies $Df=0$ for all $D$, then $f$ is a solution of $I$. That is 
\begin{eqnarray}\label{eq_h}
\ell_{i} f=0 \ \forall i
\end{eqnarray}
When $f$ satisfies Equation (\ref{eq_h}), $f$ is called {\it holonomic function}.

Let $s=[s_{1},\ldots,s_{t}]$,
with $s_{i}=q_{i}(x)\partial^{\alpha_{i}}$ be a standard basis in the quotient vector space $R/I$ which is a finite dimensional vector spaces. 
Let $G$ be the Gr\"{o}bner basis of $I$. The rank of arbitrary differential operations can be reduced by normalization by $G$. 
Assume that $\partial_{i} s_{j} \rightarrow_{G} \sum_{k}c^{i}_{jk}s_{k}$ holds. For a solution $f$ of $I$ put $F=(f,s_{2}f,\ldots,s_{t}f)^{T}$. Then, it holds that  
\begin{prop}(see, e.g.,Nakayama et al.(2011))
\begin{eqnarray}
\frac{\partial F}{\partial x_{i}}=P_{i}F,\ \ i=1,...,n
\end{eqnarray}
where $P_{i}$ is a $t \times t$ matrix with $c^{i}_{jk}$ as  a $(j,k)$ element  
\end{prop}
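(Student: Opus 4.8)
The plan is to establish the Pfaffian system $\partial F/\partial x_i = P_i F$ by a direct component-by-component computation, using the definition of $F$ together with the Gröbner-basis normalization rule $\partial_i s_j \to_G \sum_k c^i_{jk} s_k$. The central observation is that the vector $F = (f, s_2 f, \ldots, s_t f)^T$ has entries indexed by the standard monomials $s_1, \ldots, s_t$ of the quotient $R/I$ (with $s_1$ understood as the identity so that $s_1 f = f$), and that differentiating any entry produces a new differential operator acting on $f$, which we must re-express in the standard basis.

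First I would fix the index $i$ and compute the $j$-th component of $\partial F/\partial x_i$, namely $\partial_i(s_j f)$. Since $s_j = q_j(x)\partial^{\alpha_j}$ is a differential operator, $\partial_i s_j$ is again an element of the ring $R$, generally not a standard monomial. The key step is to invoke the normal-form reduction modulo the Gröbner basis $G$: by hypothesis $\partial_i s_j \equiv \sum_k c^i_{jk} s_k \pmod{I}$, i.e. $\partial_i s_j - \sum_k c^i_{jk} s_k \in I$. I would then apply both sides to the solution $f$. Because $f$ is a solution of the left ideal $I$, every operator $\ell \in I$ annihilates $f$, so $(\partial_i s_j)f = \sum_k c^i_{jk} s_k f$. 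Reading off the right-hand side, the $j$-th component of $\partial F/\partial x_i$ equals $\sum_k c^i_{jk}(s_k f)$, which is exactly the $j$-th component of $P_i F$ when $P_i$ has $c^i_{jk}$ as its $(j,k)$ entry. Stacking over $j = 1, \ldots, t$ gives the matrix identity.

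The main obstacle I would watch for is justifying the passage from the congruence $\partial_i s_j \equiv \sum_k c^i_{jk} s_k \pmod I$ to the functional identity after applying the operators to $f$. This rests precisely on the defining property (\ref{eq_h}) that $\ell f = 0$ for every $\ell \in I$: the difference $\partial_i s_j - \sum_k c^i_{jk} s_k$ lies in $I$ and therefore kills $f$, so the two operators agree \emph{as functions of $f$} even though they differ as abstract operators. One must also confirm that the reduction $\to_G$ is well defined and yields unique coefficients $c^i_{jk}$, which is guaranteed by the standard theory of Gröbner bases in the Weyl algebra: normal forms modulo a Gröbner basis are unique, and the finiteness of the basis $s = [s_1, \ldots, s_t]$ (holonomicity forces $R/I$ to be finite-dimensional) ensures the sum over $k$ is finite and the matrices $P_i$ have entries in $k[\bm{x}]$.

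A secondary point worth verifying is the consistency/integrability of the resulting system, i.e. that the $P_i$ satisfy the compatibility conditions $\partial_i P_j - \partial_j P_i + P_j P_i - P_i P_j = 0$ arising from the commutativity of the partial derivatives $\partial_i \partial_j = \partial_j \partial_i$. This is not strictly needed to prove the stated equality of the proposition — which is merely the existence of the Pfaffian representation — but it follows automatically from the same annihilation argument applied to second-order operators, and I would mention it as a sanity check rather than carry out the computation in full.
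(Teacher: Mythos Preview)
Your argument is correct and follows essentially the same route as the paper: compute the $j$-th component $\partial_i(s_j f)$, reduce $\partial_i s_j$ modulo $I$ via the Gr\"obner basis to $\sum_k c^i_{jk} s_k$, and use that $f$ is annihilated by every element of $I$ to pass from the congruence to an equality, yielding $[P_i F]_j$. Your additional remarks on uniqueness of normal forms and the integrability conditions are sound but go beyond what the paper records.
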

\begin{proof}
\begin{eqnarray}
\frac{\partial s_{j}f(x)}{\partial x_{i}} &=& (\partial_{i} \bullet s_{j})f(x)  \nonumber \\
                                          &=&  (\sum_{k}c^{i}_{jk}s_{jk})f(x) \ \  (mod \ \ I)  \nonumber \\
                                          &=&  [P_{i}F]_{j}, \ \ i=1,...,n,\ \ j=1...,t
\end{eqnarray}
This proves the assertion.
\end{proof}
The above differential equations are called {\it Pfaffian differential equations} or {\it Pfaffian system} of $I$. So we can calculate the gradient of $F$ by using Pfaffian differential equations. Then, $\nabla f(x_{k})$ and $H^{-1}(x_{k})$ are also given by Pfaffian differential equations. (see Hibi et al.(2012))
\begin{lemma}
Let $\sum_{j}^{t}a_{ij}s_{j}$ be the normal form of  $\partial_i=\partial / \partial x_i$ by $G$ 
and $\sum_{k}^{t}u_{ijk}s_{k}$ be the normal form of $\partial_i \partial_j$ by $G$. Then we have,
\begin{eqnarray}
\partial_{i} f(x_k) \rightarrow_{G} (\sum_{j}^{t}a_{ij}s_{j})f(x_k) =\sum_{j}^{t}a_{ij}F_j(x_k) = ((P_1F(x_k) )_1,...,(P_nF(x_k) )_1) \\
\partial_{i}\partial_{j}f(x_k) \rightarrow_{G} (\sum_{m}^{t}u_{ijk}s_{m})f(x_k)= \sum_{m}^{t}u_{ijm}F_m(x_k)=((\frac{\partial P_i}{\partial x_j}+P_iP_j)F(x_k))_1
\end{eqnarray}
where $(v)_1$ denotes the first entry of a vector $v$.
\end{lemma}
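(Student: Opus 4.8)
The plan is to treat the two assertions separately, deriving each from the Proposition together with the defining property of holonomicity, namely $\ell f=0$ for every $\ell\in I$.

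First I would handle the gradient. Because $G$ is a Gr\"{o}bner basis of $I$, the reduction $\partial_i \rightarrow_G \sum_j a_{ij}s_j$ means precisely that $\partial_i-\sum_j a_{ij}s_j\in I$. Applying this operator to the holonomic solution $f$ annihilates the $I$-part, so $\partial_i f=\sum_j a_{ij}(s_j f)=\sum_j a_{ij}F_j$, where I use $F_j=s_j f$. It then remains to recognize the coefficient vector $(a_{ij})_j$ inside $P_i$. Since $s_1=1$, we have $\partial_i=\partial_i s_1$, and the definition $\partial_i s_j\rightarrow_G\sum_k c^i_{jk}s_k$ specialized to $j=1$ gives $a_{ij}=c^i_{1j}$, i.e. the $a_{ij}$ are exactly the first row of $P_i$. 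Hence $\sum_j a_{ij}F_j=(P_iF)_1$, and stacking over $i=1,\dots,n$ yields the gradient $((P_1F)_1,\dots,(P_nF)_1)$.

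For the Hessian I would build on the gradient identity $\partial_i f=(P_iF)_1$ rather than reducing $\partial_i\partial_j$ directly. Differentiating this scalar identity with respect to $x_j$ and using the Proposition $\partial_j F=P_jF$, the product rule gives $\partial_j\partial_i f=[(\partial_jP_i)F+P_i\partial_jF]_1=[(\frac{\partial P_i}{\partial x_j}+P_iP_j)F]_1$. Since $f$ is smooth, Schwarz's theorem lets me identify $\partial_j\partial_i f=\partial_i\partial_j f$, which is the claimed formula. The equality with $\sum_m u_{ijm}F_m$ then follows exactly as in the gradient case: $\partial_i\partial_j\rightarrow_G\sum_m u_{ijm}s_m$ means $\partial_i\partial_j-\sum_m u_{ijm}s_m\in I$, so applying it to the holonomic $f$ gives $\partial_i\partial_j f=\sum_m u_{ijm}F_m$.

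The delicate point I anticipate is reconciling the normal-form coefficients $u_{ijm}$ with the matrix expression. If instead of differentiating the gradient one reduces $\partial_i\partial_j$ inside the Weyl algebra --- writing $\partial_i\partial_j\equiv\partial_i\sum_k a_{jk}s_k$ and using the non-commutation rule $\partial_i a=a\partial_i+\frac{\partial a}{\partial x_i}$ --- the coefficients organize into $\frac{\partial P_j}{\partial x_i}+P_jP_i$ rather than $\frac{\partial P_i}{\partial x_j}+P_iP_j$. These two matrices agree only through the integrability (zero-curvature) condition $\frac{\partial P_i}{\partial x_j}-\frac{\partial P_j}{\partial x_i}=P_jP_i-P_iP_j$ of the Pfaffian system, which is guaranteed because $\partial_iF=P_iF$ comes from an honest function $f$. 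I would therefore make the commutativity of second derivatives, equivalently this integrability condition, the explicit hinge of the argument; otherwise the lemma is a direct bookkeeping of the reductions already justified by the Proposition.
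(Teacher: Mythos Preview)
The paper does not actually supply a proof of this lemma; it is stated immediately after the Proposition and attributed to Hibi et al.~(2012), with the text moving straight on to the Algorithm subsection. So there is no in-paper argument to compare against.

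Your proof is correct and uses exactly the ingredients the paper makes available: the holonomicity of $f$ (so that any element of $I$ annihilates it), the implicit normalization $s_1=1$ (visible in the paper's convention $F=(f,s_2f,\ldots,s_tf)^T$), and the Proposition $\partial_j F=P_jF$. The gradient step is then pure bookkeeping---identifying the $a_{ij}$ with the first row of $P_i$ via $\partial_i=\partial_i s_1$---and the Hessian step, differentiating the scalar identity $(P_iF)_1$ and feeding in the Pfaffian system, is the natural route and exactly what one would expect the cited reference to do. Your closing remark on the zero-curvature condition is a nice consistency check but not logically required here: once both $\bigl(\tfrac{\partial P_i}{\partial x_j}+P_iP_j\bigr)F$ and $\bigl(\tfrac{\partial P_j}{\partial x_i}+P_jP_i\bigr)F$ are identified with $\partial_i\partial_j f$ in their first entry, Schwarz's theorem for the smooth $f$ already closes the loop.
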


\subsection{Algorithm}

For HGD, we first give an ideal $I=\{\ell_i|i=1,...,p,\ \ell_{i} f=0 \ \forall i\} $ for holonomic function $f(x)$ and calculate the Gr\"{o}bner basis $G$ of $I$ and then the standard basis $S$ are given by $G$. The coefficient matrix $P_{i}$ for Pfaffian system is led by this standard basis, and  $H^{-1}(x_{k})$ and $\nabla f(x_{k})$ are calculated from $S$ by starting from a initial point $x_0$ through the Pfaffian equations. After these, we can compute automatically the optimum solution by a mixed use of  then Newton-Raphson method. The algorithm is given by below.
\newpage
\begin{algo}  
\end{algo}\begin{itemize}\setlength{\itemindent}{1.5mm}
\item[step 1]Set $k=0$ and take an initial point $x_{0}$ and evaluate $F(x_0)=(f(x_0),s_{1}f(x_0),\ldots,s_{t}f(x_0))^{T}$.
\item[step 2]Evaluate $\nabla f(x_{k})$ and $-H^{-1}(x_{k})$ from $F$ and calculate the Newton direction, $d_k=-H^{-1}(x_{k})\nabla f(x_{k})$
\item[step 3]Update a search point by $x_{k+1}=x_{k}+d_{k}$.
\item[step 4]Evaluate $F(x_{k+1})$ by solving Pfaffian equations numerically.
\item[step 5]Set $k=k+1$ and calculate $F(x_{k+1})$ and  goes to step.2 and repeat until convergence.
\end{itemize}

\begin{remark} The key step of the above algorithm is step 4. We can not  evaluate $F(x_{k+1})$ by inputting $x_{k+1}$ in the function $f(x)$ since the HGD treats the case that $f(x)$ is difficult to calculate numerically. 
Instead,  we only need calculate $f(x_0)$ and $F(x_0)$ numerically for a given initial value $x_0$.
\end{remark}

\section{Constrained holonomic gradient descent method}

Now, we propose the method in which we add constraint conditions to HGD and call it the constrained holonomic gradient descent method(CHGD).

\subsection{How to add the constraints}

For treating constraints  we use the penalty function and add it to objective function and make a new objective function and can treat it as the unconstrained optimization problem. We use HGD for evaluation of gradients and 
Hessian  and use the exact penalty function method for constraints. The value of updating a search point can be obtained as the product of directional vector and step size. The step size $\alpha$ is chosen so that the following Armijo condition is satisfied. In fact we chose $\alpha$ such that  
	\begin{eqnarray} \label{eq_s}
		P(x_k + \alpha x_k ; \rho) \leq P(x_k ; \rho) + \xi \alpha \{ P_l (x_k , \nabla x_k ; \rho) - P(x_k ; \rho) \},
	\end{eqnarray}
where $0 < \xi < 1$ and $P_l (x_k , \nabla x_k ; \rho)$ is the approximation of $P(x_k, \rho)$ given by.	
	\begin{eqnarray}
		P_l (x_k , \nabla x_k ; \rho) &= &f(x_k) + \nabla f(x_k)^T \nabla x_k \nonumber \\ 
		& + & \rho \{  \sum_{i=1} ^m | g_i (x_k) + \nabla g_i (x_k) ^T \nabla x_k | \nonumber  \\
		& + &  \sum_{j=i} ^n max(0, h_j(x_k) + \nabla h_j (x_k) ^T \nabla x_k )\}
	\end{eqnarray}
The initial value of $\alpha$ is set $1$ and then $\alpha$ is made smaller iteratively until $\alpha$ satisfies Equation (\ref{eq_s}), or $\alpha=0$. 

 In our algorithm, holonomic gradient descent plays a role to calculate the gradient vectors and then the penalty function plays a role to control the step size iteratively. 

\section{Computational results}

We apply CHGD for MLE for von Mises distribution(vM). The process of applying for HGD is shown in Nakayama et al.(2011).
The density function of vM is given by  $f(\kappa, \mu)=e^{\kappa \cos (\mu - x)}/\int_0^{2\pi} e^{\kappa \cos( \mu - t)}dt$. The parameters of vM, $\kappa$ and $\mu$, show concentration and mean of angle data ${\bm x}$ respectively. We set the parameters for MLE $\theta_1=\kappa \cos \mu$ and $\theta_2 = \kappa \sin \mu $. Now we solve the constrained optimization problem given by.
\begin{eqnarray}\label{optvM}
(P)&min&\ \ L(\theta_1,\theta_2) = e^{-\overline{c}\theta_1-\overline{s}\theta_2}\int_{0}^{2\pi} e^{\theta_1 \cos t+\theta_2 \sin t} dt \nonumber \\
&s.t.&\  \theta_1 \leq \theta_2
\end{eqnarray}
Let  ${\bm x}$ be sample data. Let $n$ be sample size. Then, $\overline{c}=\frac{1}{n} \sum_i^n \cos x_i$ and $\overline{s}=\frac{1}{n} \sum_i^n \sin x_i$. 

\subsection{Simulation}

In our simulation, we set the vM's parameter $(\kappa, \mu)=(5,\pi/4)$ of which the true value $(\theta_1, \theta_2)=(3.54, 3.54)$ and the initial value $(\theta_1, \theta_2)=(-2.0, 0.1)$. We tried the 2 patterns of constraints. Both of the case worked under the same condition except constraints. 
In Figure 1, the constraint is $\theta_1 \leq \theta_2$. In Figure 2, the constraint is  $\theta_1^2+\theta_2^2 \leq 9$.
Figures 1,2 are the drawing of the trace of the search point.
\newline
\begin{figure}[htbp]
\begin{tabular}{cc}
\begin{minipage}{0.5\hsize}
\begin{center}
  \includegraphics[bb=70 0 533 539, scale=.4]{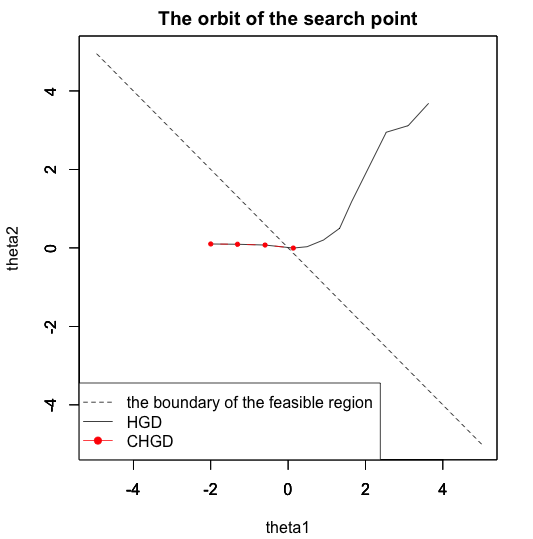}
 \caption{the case of $ \theta_1 \leq \theta_2$}\end{center}
\end{minipage}
\begin{minipage}{0.5\hsize}
\begin{center}
  \includegraphics[bb=70 0 533 539, scale=.4]{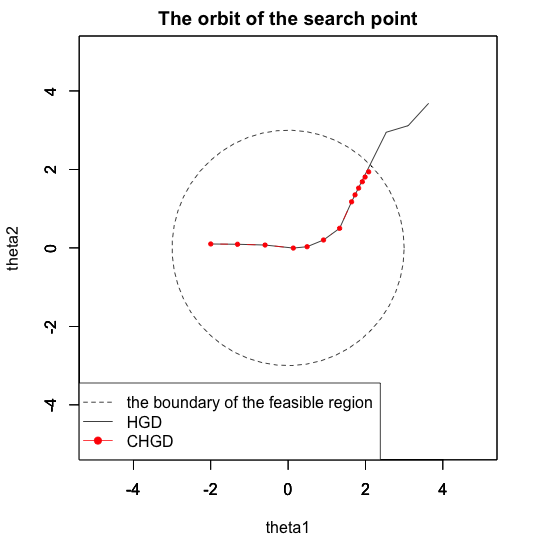}
 \caption{the case of $\theta_1^2+\theta_2^2 \leq 9$}
\end{center}
\end{minipage}
\end{tabular}
\end{figure}
\newline\newline
The result of simulation, the convergence point of HGD is $(\theta_1, \theta_2)=(3.63, 3.67)$. 
In Figure 1, the convergence point of CHGD is $(\theta_1, \theta_2)=(0.13, -0.004)$. In Figure 2, the convergence point of CHGD is $(\theta_1, \theta_2)=(2.08,  1.94)$. In the CHGD, the search direction is almost same as the HGD, because the direction is decided by the HGD's algorithm.
While, the constraints play the role to judge the search point is within the feasible region or not and decide the step size.
\newpage

\subsection{Runtimes}

CHGD is the effective method for optimization with constraints.
However, whenever CHGD increases the cost of runtimes than HGD regardless of whether the solution is in the feasible region or not.
The following table shows the runtimes when the optimization solution is within the feasible region. 
\newline
\begin{table}[htb]
  \begin{center}
  \caption{Comparing the runtimes\label{tb1}}
  \begin{tabular}{l||c|c|c} \hline
    & CPU TIME (sec) & \multicolumn{2}{c}{PATAMETERS $(\theta_1,\theta_2)$}   \\ \hline \hline
    HGD & 0.03698 &\ \ \ \ \ \ \ \ \ 2.120627\ \ \ \ \ \ \ \ \ \ & 2.120333 \\
    CHGD & 0.09834 & 2.120803 & 2.120629 \\
    NEWTON & 0.12598 & 2.124429 & 2.124855 \\\hline
    \multicolumn{4}{c}{\small We programmed by R and executed on Windows 7 64bit with RStudio Version 0.97.336}
  \end{tabular}
  \end{center}
\end{table}

In Table \ref{tb1}, all numbers are the means of 500 times trials. The optimization problem is Equation (\ref{optvM}). Sample data is drawn from the vM with $(\theta_1,\theta_2)=(2.12, 2.12)$. The third column of Table \ref{tb1} is the result with only Newton-Raphson method which optimize $f(x)$ directly, not use Pfaffian system. Thus, we see that HGD and CHGD is faster than Newton-Raphson method.

We see that the runtimes of CHGD is longer than HGD in general, where the both of solutions are almost the same value when the solution is inside  the feasible region. Sometimes the process finishes early by constraints, when the solution is outside the feasible region. Although, we need consider the cost of calculation of CHGD.

\end{document}